\newtheorem{lemma}{Lemma}[section]
\newtheorem{theorem}[lemma]{Theorem}
\newtheorem{corollary}[lemma]{Corollary}
\newtheorem{definition}[lemma]{Definition}
\newtheorem{example}[lemma]{Example}
\begin{document}

\title{Approximation Algorithm for the Partial Set Multi-Cover Problem}
\author{\footnotesize Yishuo Shi$^{1}$\quad Yingli Ran$^2$\quad Zhao Zhang$^2$\quad James Willson$^3$\quad Guangmo Tong$^3$\quad Ding-Zhu Du$^3$\\
    {\it\small $^1$ College of Mathematics and System Sciences, Xinjiang University}\\
    {\it\small Urumqi, Xinjiang, 830046, China}\\
    {\it\small $^2$ College of Mathematics and Computer Science, Zhejiang Normal University}\\
    {\it\small Jinhua, Zhejiang, 321004, China}\\
{\it\small $^3$ Department of Computer Science, University of Texas at Dallas}\\
    {\it\small Richardson, Texas, 75080, USA}}

\date{}
\maketitle

\begin{abstract}
Partial set cover problem and set multi-cover problem are two generalizations of set cover problem. In this paper, we consider the partial set multi-cover problem which is a combination of them: given an element set $E$, a collection of sets $\mathcal S\subseteq 2^E$, a total covering ratio $q$ which is a constant between 0 and 1, each set $S\in\mathcal S$ is associated with a cost $c_S$, each element $e\in E$ is associated with a covering requirement $r_e$, the goal is to find a minimum cost sub-collection $\mathcal S'\subseteq\mathcal S$ to fully cover at least $q|E|$ elements, where element $e$ is fully covered if it belongs to at least $r_e$ sets of $\mathcal S'$.
Denote by $r_{\max}=\max\{r_e\colon e\in E\}$ the maximum covering requirement. We present an $(O(\frac{r_{\max}\log^2n}{\varepsilon}),1-\varepsilon)$-bicriteria approximation algorithm, that is, the output of our algorithm has cost at most $O(\frac{r_{\max}\log^2 n}{\varepsilon})$ times of the optimal value while the number of fully covered elements is at least $(1-\varepsilon)q|E|$.

{\bf Keywords:} partial set multi-cover; minimum densest sub-collection; approximation algorithm; bicriteria algorithm.
\end{abstract}

\section{Introduction}
 Set cover is an extensively studied problem in combinatorial optimization. In this paper, we study a variant of the set cover problem, namely the {\em partial set multi-cover problem}, which is defined as follows.

\begin{definition}[Partial Set Multi-Cover (PSMC)]\label{defPMSPC}
{\rm  Suppose $E$ is an element set, $\mathcal{S}\subseteq 2^{E}$ is a collection of subsets of $E$, each set $S\in \mathcal{S}$ has a cost $c_S$, each element $e\in E$ has a positive covering requirement $r_e$. For a sub-collection $\mathcal S'\subseteq \mathcal S$, denote by $\mathcal S'_e=\{S\in\mathcal S'\colon e\in S\}$ those sets of $\mathcal S'$ containing element $e$. If $|\mathcal S'_e|\geq r_e$, we say that $e$ is fully covered by $\mathcal S'$, denoted as $e\sim \mathcal S'$. The cost of sub-collection $\mathcal S'$ is $c(\mathcal S')=\sum_{S\in \mathcal S'}c(S)$. Given $E,\mathcal S,c,r$ with $|E|=n$ and a real number $q$ which is a constant between 0 and 1, the PSMC problem is to find a minimum cost sub-collection $\mathcal S'$ such that $|\{e\in E\colon e\sim\mathcal S'\}|\geq qn$. An instance of PSMC is denoted as $(E,\mathcal S,c,r,q)$. }
\end{definition}

The PSMC problem includes two important variants of the set cover problem. When $r_e\equiv 1$, it is the \emph{partial set cover problem} (PSC). When $q=1$, it is the \emph{set multi-cover problem} (SMC). One motivation of PSC comes from the phenomenon that in a real world, ``satisfying all requirements'' will be too costly or even impossible, due to resource limitation or political policy \cite{Charikar}. And SMC comes from the requirement of fault tolerance in practice \cite{Zhang}. There are a lot of researches on PSC and SMC, achieving performance ratios matching the lower bounds for the classic set cover problem, namely $\ln n$ and $f$, where $n$ is the number of elements and $f$ is the maximum number of sets containing a common element. However, study on the combination of these two problems is very rare. According to our recent paper \cite{Ran3}, under the ETH assumption, the PSMC problem cannot be approximated within factor $O(n^\frac{1}{2(\log \log n)^c})$ for some constant $c$.

The aim of this paper is to explore a greedy strategy on PSMC.

\subsection{Related Work}\label{secRW}

The set cover problem (SC) was one of the first 21 problems proved to be NP-hard in Karp's seminal paper \cite{Karp}. In fact, Feige \cite{Feige} proved that it cannot be approximated within factor $(1-o(1))\ln n$ unless $NP\subseteq DTIME(n^{O(\log\log n)})$, where $n$ is the number of elements. Dinur and Steurer \cite{Dinur} proved the same lower bound under the assumption that $P\neq NP$. Khot and Regev \cite{Khot} showed that it cannot be approximated within factor $f-\varepsilon$ for any constant $\varepsilon>0$ assuming that unique games conjecture is true, where $f$ is the maximum number of sets containing a common element. On the other hand, greedy strategy achieves performance ratio $H(\Delta)\leq \ln\Delta+1$ \cite{Chvatal,Johnson,Lovasz}, where $\Delta$ is the maximum cardinality of a set and $H(\Delta)=1+\frac{1}{2}+\cdots+\frac{1}{\Delta}$ is the Harmonic number. And $f$-approximation exists by either LP rounding method \cite{Hochbaum} or local ratio method \cite{BarYuhuda}.

In paper \cite{Dobson}, Dobson first gave an $H(K)$-approximation algorithm for multi-set multi-cover problem (MSMC), where $K$ is the maximum size of a multi-set. Rajagopalan and Vazirani \cite{Rajagopalan} gave a greedy algorithm achieving the same performance ratio, using dual fitting analysis, which implies that the integrality gap of the classic linear program of MSMC is at most $H(K)$.

For the partial set cover problem, Kearns \cite{Kearns} gave a greedy algorithm achieving performance ratio $2H(n)+3$. By modifying the greedy algorithm a little, Slavik \cite{Slavik} improved the performance ratio to $H(\min \{\lceil qn\rceil,\Delta\})$, where $q$ is the percentage that elements are required to be covered. Gandhi {\it et al.} \cite{Gandhi} proposed a primal-dual algorithm achieving performance ratio $f$. Bar-Yuhuda \cite{Bar-Yuhuda} studied a generalized version in which each element has a profit and the total profit of covered elements should exceed a threshold. Using local ratio method, he also obtained performance ratio $f$. Konemann {\it et al.} \cite{Konemann} presented a Lagrangian relaxation framework and obtained performance ratio $(\frac{4}{3}+\varepsilon)H(\Delta)$ for the generalized partial set cover problem.

From the above related work, it can be seen that both PSC and SMC admit performance ratios matching those best ratios for the classic set cover problem. However, combining partial set cover with set multi-cover has enormously increased the difficulty of studies. Ran \emph{et al.} \cite{Ran} were the first to study approximation algorithms for PSMC, using greedy strategy and dual-fitting analysis. However, their ratio is meaningful only when the covering percentage $q$ is very close to $1$. In paper \cite{Ran2}, the authors presented a simple greedy algorithm achieving performance ratio $\Delta$. They also presented a local ratio algorithm, which reveals a what they called ``shock wave'' phenomenon: their performance ratio is $f$ for both PSC and SMC , however, when $q$ is smaller than 1 by a very small constant, the ratio jumps abruptly to $O(n)$. In our recent paper \cite{Ran3}, we proved that PSMC cannot have a better than polynomial performance ratio by a reduction from the well-known {\em densest $k$-subgraph problem}.

\subsection{Our Contribution and Techniques}\label{subsecContribution}

The contributions of this paper is summarized as follows.
\begin{itemize}
\item A new problem called {\em minimum density sub-collection} (MDSC) is defined, which is to find a sub-collection $\mathcal S'\subseteq \mathcal S$ to minimize the ratio $c(\mathcal S')/|\mathcal C(\mathcal S')|$, where $\mathcal C(\mathcal S')$ is the set of elements fully covered by $\mathcal S'$. We prove that MDSC is also NP-hard.
\item We show that if MDSC has an $\alpha$-approximation algorithm, then PSMC has an $(O(\frac{\alpha}{\varepsilon}),1-\varepsilon)$-bicriteria approximation algorithm, that is, the output of our algorithm has cost at most $O(\frac{\alpha}{\varepsilon})$ times that of an optimal solution, while the total number of fully covered elements is at least $q(1-\varepsilon)n$, where $q$ is the covering ratio required by the problem and $n$ is the total number of elements.
\item We design an $O(r_{\max}\log^2n)$-approximation algorithm for MDSC, where $r_{\max}=\max_{e\in E}r_e$ is the maximum covering requirement of elements. Combining this result with the above, PSMC has an $(O(\frac{r_{\max}\log^2n}{\varepsilon}),1-\varepsilon)$-bicriteria algorithm.
\end{itemize}

Our algorithm uses a greedy strategy. However, there is a problem of which sets should be chosen in each iteration. As indicated by previous studies in paper \cite{Ran}, a natural generalization of the classic greedy algorithm cannot yield good results. One reason might be that the number of elements fully covered by a sub-collection of sets is {\em not} submodular. In this paper, our greedy algorithm iteratively picks an approximate solution to the MDSC problem until the number of elements which are fully covered reaches a certain degree. An obstacle to obtaining a good approximation factor lies in the last iteration: the sub-collection chosen in the last iteration might cover much more elements than required.  Although its density is low, its cost might be too large to be bounded by the optimal value. So, we stop the algorithm when at least $q(1-\varepsilon)n$ elements are fully covered, and thus leading to a bicriteria approximation algorithm.

A crucial stepping-stone to the above algorithm is the MDSC problem, which is also NP-hard. An example can be constructed showing that a natural LP formulation has integrality gap arbitrarily large. To overcome such a difficulty, we formulate the problem as a linear program using a language having a taste of ``flow'' and made use of an approximation algorithm for the minimum node-weighted Steiner network problem as a subroutine to yield a performance guaranteed approximation algorithm for MDSC.

The paper is organized as follows. In Section \ref{secPre}, we give the definition of MDSC and prove its NP-hardness. In Section \ref{secPSMC}, we show how an $\alpha$-approximation algorithm for MDSC leads to an $(O(\frac{\alpha}{\varepsilon}),1-\varepsilon)$-bicriteria algorithm for PSMC. In Section \ref{secMDSC}, we propose an $O(r_{\max}\log^2 n)$-approximation algorithm for MDSC. In Section \ref{secConclude}, the paper is concluded with some discussions on future work.

A preliminary version of this paper was presented in INFOCOM2017. There is a flaw in that version. We explain in the appendix where is the flaw.

\section{Preliminaries}\label{secPre}
For simplicity of statement, we shall use $\mathcal C(\mathcal S')$ to denote the set of elements fully covered by sub-collection $\mathcal S'$. Define the {\em density} of a sub-collection $\mathcal S'$ as
$$den(\mathcal S')=c(\mathcal S')/|\mathcal C(\mathcal S')|.$$

\begin{definition}[Minimum Density Sub-Collection (MDSC)]
{\rm Given $E,\mathcal S,c,r$, the MDSC problem is to find a sub-collection with the minimum density}.
\end{definition}

Unfortunately, MDSC is also NP-hard.

\begin{theorem}
The MDSC problem is NP-hard.
\end{theorem}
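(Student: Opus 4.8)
The plan is to prove NP-hardness by a reduction from a known NP-hard problem, and the most natural candidate is the \emph{densest $k$-subgraph problem} (DkS), since the introduction already signals (via the reference to \cite{Ran3}) that PSMC's hardness flows from DkS, and MDSC is essentially the ``density optimization'' core of that connection. Recall that DkS asks, given a graph $G=(V,E_G)$ and an integer $k$, for a set of $k$ vertices inducing the maximum number of edges; its decision version is NP-complete. My first step would be to set up an instance of MDSC whose ground set $E$ corresponds to the edges of $G$ and whose collection $\mathcal S$ corresponds to the vertices of $G$: for each vertex $v\in V$ create a set $S_v=\{\text{edges incident to }v\}$, assign every set unit cost $c_{S_v}=1$, and assign every element (edge) the covering requirement $r_e=2$. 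Under this encoding, an edge $e=\{u,w\}$ is fully covered by a sub-collection $\mathcal S'$ exactly when \emph{both} $S_u$ and $S_w$ are chosen, i.e.\ when both endpoints of $e$ lie in the selected vertex set. Hence, for a chosen vertex set $U\subseteq V$, the number of fully covered elements $|\mathcal C(\mathcal S')|$ equals the number of edges induced by $U$, while the cost $c(\mathcal S')=|U|$. The density $den(\mathcal S')$ is therefore $|U|/(\text{edges induced by }U)$, so \emph{minimizing} density is the same as \emph{maximizing} edges-per-vertex over all induced subgraphs.

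The key analytical step is to convert this "maximize edge-density ratio'' statement into the fixed-size DkS problem. Because the optimum of MDSC is attained at some integer vertex count, I would argue that one can recover a densest $k$-subgraph by examining MDSC optima restricted to, or combined with, a size constraint. Concretely, the plan is to show that solving MDSC lets us, for each fixed $k$, determine the maximum number of edges induced by any $k$ vertices: since the candidate sizes range only over $1,\dots,|V|$, one can run (or query) the density optimization polynomially many times and read off the best edge-count at each cardinality. An alternative, cleaner route avoiding the cardinality bookkeeping is to reduce from the \emph{decision} version of maximum-density subgraph or from a weighted variant, but I would prefer to keep costs and requirements uniform so that the density formula is transparent.

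The technical gap I anticipate is that MDSC optimizes a \emph{ratio} with no size constraint, whereas DkS fixes the size $k$; the unconstrained ratio optimum need not coincide with any particular $k$-densest subgraph, so a single MDSC call does not immediately answer DkS. To bridge this, I would either (i) attach auxiliary ``padding'' elements and sets, each contributing controlled cost and coverage, to tune the effective size of the optimal sub-collection and thereby simulate the cardinality constraint $k$, or (ii) reduce instead from a problem whose natural objective is already a density ratio, such as the NP-hard problem of deciding whether a graph contains a subgraph of average degree at least a given threshold (which, unlike the globally densest subgraph computable in polynomial time, becomes hard once a side constraint is imposed). Getting the gadget weights in route (i) to exactly enforce the right size while preserving the density comparison is where the care lies, and I expect that to be the main obstacle of the argument; once the gadget is calibrated, the equivalence ``MDSC optimum $\le \tau$ iff the DkS instance is a yes-instance'' follows by the direct density computation above, and polynomial-time computability of the reduction is immediate since $\mathcal S$ and $E$ have size linear in $|V|$ and $|E_G|$.
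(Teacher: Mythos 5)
There is a genuine gap, and you have in fact named it yourself without closing it. Under your encoding (elements are edges, sets are vertices, $r_e\equiv 2$, unit costs), minimizing $den(\mathcal S')$ is exactly the problem of maximizing $|E(U)|/|U|$ over all vertex subsets $U$ --- the \emph{unconstrained} maximum-density subgraph problem, which is solvable in polynomial time by flow techniques. So the reduction as stated maps an NP-hard problem (DkS) to instances of MDSC on which MDSC is easy, which proves nothing. Your proposed repairs do not work as described: route (i), ``padding'' gadgets that simulate the cardinality constraint $k$, is precisely the hard part of any such reduction and is left entirely unconstructed (and it is far from clear that uniform-cost, uniform-requirement gadgets can encode a fixed size $k$ into a pure ratio objective); the suggestion to ``query the density optimization polynomially many times and read off the best edge-count at each cardinality'' is incoherent, since MDSC has no cardinality parameter to vary. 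Route (ii) is just a restatement of the goal --- find a hard ratio problem to reduce from --- not a proof.

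For contrast, the paper reduces from perfect 3-dimensional matching using a single auxiliary element $u_0$ placed in every set with covering requirement $r_{u_0}=k$, while all other elements have requirement $1$ and all sets have unit cost. This gadget is what your proposal is missing: it forces any sub-collection that fully covers $u_0$ to contain at least $k$ sets, and the density function $t/(3t+1)$ is strictly increasing in $t$, so the optimum $k/(3k+1)$ is attained exactly when $k$ sets cover all $3k+1$ elements, i.e., exactly when a perfect matching exists; fewer than $k$ sets miss $u_0$ and have density at least $1/3$. If you want to salvage your approach, you need an analogous device --- a high-requirement element contained in every set --- to pin the size of the optimal sub-collection; without it the argument does not establish hardness.
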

\begin{proof}
We reduce the perfect 3-dimensional matching problem (which is APX-hard \cite{Ausiello}) to MDSC. Given an integer $k$, three sets $X,Y,Z$ each having cardinality $k$, and a set $T\subseteq X\times Y\times Z$, the perfect 3-dimensional matching problem asks whether there is a subset $T'\subseteq T$ with $|T'|=k$ such that for any elements $(x,y,z),(x',y',z')\in T'$, $x\neq x'$, $y\neq y'$, and $z\neq z'$. Construct an instance of MDSC as follows. Let $E=X\cup Y\cup Z\cup \{u_0\}$ and $\mathcal S=\{\{x,y,z,u_0\}\colon (x,y,z)\in T\}$. The covering requirement $r_{u_0}=k$ and $r_u=1$ for $u\in X\cup Y\cup Z$. The cost $c_S=1$ for all $S\in\mathcal S$.

Next, we show that there is a perfect 3-dimensional matching if and only if the optimal value for the MDSC problem is $k/(3k+1)$. In fact, if $T'$ is a perfect 3-dimensional matching, then $\mathcal S'=\{\{x,y,z,u_0\}\colon (x,y,z)\in T'\}$ has $|\mathcal S'|=|T'|=k$ and $|\mathcal C(\mathcal S')|=3k+1$. Suppose the instance does not have a perfect 3-dimensional matching. Consider an arbitrary sub-collection $\mathcal S''$ and its corresponding subset $T''=\{(x,y,z)\colon \{x,y,z,u_0\}\in\mathcal S''\}$. Then $|T''|= |S''|=c(\mathcal S'')$. If $|T''|>k$, then $c(\mathcal S')/|\mathcal C(\mathcal S')|>k/(3k+1)$. If $|T''|<k$, then $u_0\not\in\mathcal C(\mathcal S'')$ and thus $|\mathcal C(\mathcal S'')|\leq 3|T''|$. In this case, $c(\mathcal S'')/|\mathcal C(\mathcal S'')|\geq |T''|/(3|T''|)>k/(3k+1)$. The claimed result is proved.
\end{proof}

\section{Bicriteria Algorithm for PSMC}\label{secPSMC}

In this section, we make use of an $\alpha$-approximation algorithm for MDSC to design a bicriteria algorithm for PSMC.

\vskip 0.2cm\subsection{The Algorithm}\label{secPMSMC}
The algorithm is presented in Algorithm \ref{main}. It follows the classic greedy strategy. A main difference is that instead of choosing sets one by one, in each iteration, it implements an $\alpha$-approximation algorithm for MDSC to greedily choose sub-collections. After each iteration, the instance is updated with respect to the current sub-collection $\mathcal F$ to form a reduced instance $(E',\mathcal S',c,r',q')$, where $E'=E-\mathcal C(\mathcal F)$ is the set of elements not having been fully covered, the total remaining covering ratio
\begin{equation}\label{eq1-25-2}
q'=\frac{qn-|\mathcal C(\mathcal F)|}{n},
\end{equation}
the remaining covering requirement for element $e$ is $r'_e=\max\{0,r_e-|\mathcal F_e|\}$, and those elements which have been fully covered by $\mathcal F$ have to be removed from each set. In the following, when we mention a {\em reduced instance} or when we say that the instance is {\em updated}, it is always understood that the above operations are executed. When the algorithm terminates, we have $q'\leq \varepsilon q$, and thus the number of fully covered elements is at least $(1-\varepsilon)qn$ by the expression of reduced covering ratio $q'$ defined in \eqref{eq1-25-2}.

\begin{algorithm}[ht!]
\caption{\textbf{{\sc Algorithm For PSMC via MDSC}}} \label{main}
{\bf Input:} A PSMC instance $(E,\mathcal S,c,r,q)$ and a real number $0<\varepsilon <1$.

{\bf Output:} A sub-collection $\mathcal F$ fully covering at least $(1-\varepsilon)qn$ elements.
\begin{algorithmic}[1]
\State $\mathcal F\leftarrow \emptyset$, $q'\leftarrow q$.
\While{$q'>\varepsilon q$}
    \State Use an $\alpha$-approximation algorithm for MDSC on the reduced instance to find a sub-collection $\mathcal R$.\label{algo-line3}
    \State $\mathcal F\leftarrow\mathcal F\cup\mathcal R$.
    \State Update the instance.
\EndWhile
\State Output $\mathcal F$.
\end{algorithmic}
\end{algorithm}

\vskip 0.2cm\subsection{Performance Ratio Analysis}\label{sec2}

Suppose Algorithm \ref{main} is executed $t$ times, selecting sub-collections $\mathcal R_1,\ldots,\mathcal R_t$. We estimate costs $\sum_{i=1}^{t-1}c(\mathcal R_i)$ and $c(\mathcal R_t)$ separately. In the following, $OPT$ denotes an optimal solution to PSMC, and $opt=c(OPT)$ is the optimal cost.

\begin{lemma}\label{lem16-3-11-1}
$\sum_{i=1}^{t-1}c(\mathcal R_i)\leq \alpha\ln\left(\frac{1}{\varepsilon}\right) \cdot opt.$
\end{lemma}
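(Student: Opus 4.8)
The plan is to bound the cumulative cost of the first $t-1$ iterations by relating each selected sub-collection $\mathcal R_i$ to the optimal solution $OPT$ of the original instance. The key observation is that $OPT$ is always a feasible solution to the reduced instance in the sense that the sets in $OPT$ (with fully-covered elements removed) still fully cover all the elements that were originally fully covered by $OPT$ but remain uncovered at iteration $i$. More precisely, at the start of iteration $i$ the reduced instance has some number $n_i$ of not-yet-fully-covered elements among those that $OPT$ eventually covers, and the restriction of $OPT$ to the reduced instance is a sub-collection covering at least those elements. Hence the minimum density over the reduced instance is at most $opt/(\text{number of still-needed elements covered by }OPT)$, which in turn is at most $opt$ divided by the number of elements the algorithm still needs to fully cover to reach its target.

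First I would set up notation: let $n_i$ denote the number of elements still required to be fully covered at the beginning of iteration $i$, so $n_1 = qn$ (up to the target) and the while-condition $q' > \varepsilon q$ ensures that throughout the first $t-1$ iterations we have $n_i > \varepsilon q n$. Because the $\alpha$-approximation for MDSC returns a sub-collection $\mathcal R_i$ whose density is at most $\alpha$ times the optimal density of the reduced instance, and because the optimal density is at most $opt/n_i$ (using $OPT$ restricted to the reduced instance as a witness sub-collection that fully covers the $n_i$ needed elements), I get
\begin{equation}\label{eq:density-bound}
\frac{c(\mathcal R_i)}{|\mathcal C(\mathcal R_i)|}\le \alpha\cdot\frac{opt}{n_i}.
\end{equation}
Writing $|\mathcal C(\mathcal R_i)| = n_i - n_{i+1}$ for the number of newly fully covered needed elements, this rearranges to $c(\mathcal R_i)\le \alpha\, opt\,(n_i-n_{i+1})/n_i$.

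Next I would sum over $i=1,\ldots,t-1$ and estimate $\sum_{i=1}^{t-1}(n_i-n_{i+1})/n_i$ by the standard telescoping-to-logarithm argument: since $n_{i+1}\le n_i$, each term $(n_i-n_{i+1})/n_i$ is bounded above by $\int_{n_{i+1}}^{n_i} dx/x = \ln(n_i/n_{i+1})$, so the sum telescopes to $\ln(n_1/n_t)$. Using $n_1\le qn$ and the fact that after iteration $t-1$ the algorithm had not yet terminated, so $n_t > \varepsilon q n$, gives $\ln(n_1/n_t)\le \ln(qn/(\varepsilon q n)) = \ln(1/\varepsilon)$, yielding the claimed bound $\sum_{i=1}^{t-1}c(\mathcal R_i)\le \alpha\ln(1/\varepsilon)\cdot opt$.

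The main obstacle I anticipate is the claim in \eqref{eq:density-bound} that the optimal density of the reduced instance is at most $opt/n_i$; this is the crux and requires care. One must verify that restricting $OPT$ to the reduced instance indeed fully covers the $n_i$ still-needed elements that $OPT$ covers in the original instance, i.e.\ that removing already-covered elements and lowering covering requirements via $r'_e=\max\{0,r_e-|\mathcal F_e|\}$ preserves full coverage of those elements by $OPT\cap\mathcal S'$. Since $OPT$ originally supplies each such element $e$ with at least $r_e$ sets and the reduced requirement only subtracts off what $\mathcal F$ already contributes, the residual $OPT$ still meets $r'_e$; thus the witness is valid and its density is at most $c(OPT)/n_i = opt/n_i$. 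I would state this feasibility carefully, as it is precisely the point where a naive argument can go wrong (indeed the excerpt hints at a flaw in the preliminary version).
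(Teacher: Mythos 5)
Your proof is correct and follows essentially the same route as the paper: bound each iteration's density by $\alpha\cdot opt/n_i$ using the residual of $OPT$ as a feasible witness on the reduced instance, then telescope $\sum (n_i-n_{i+1})/n_i$ via the integral of $1/x$ to get $\ln(1/\varepsilon)$. The only difference is that you spell out the feasibility of the restricted $OPT$ (that $r'_e=\max\{0,r_e-|\mathcal F_e|\}$ is still met and at least $n_i$ needed elements remain coverable), which the paper asserts in one line; this is a welcome elaboration, not a deviation.
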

\begin{proof}
For $i=1,2,\ldots,t$, denote by $n_i$ the number of elements remaining to be fully covered after $\mathcal R_i$ is selected. Then $|\mathcal C(\mathcal R_i)|=n_{i-1}-n_i$ for $i=1,\ldots,t-1$ where

\begin{align}
& n_0=qn\ \mbox{and}\label{eq1-25-3}\\
& n_{t-1}=qn-\sum^{t-1}_{i=1}|\mathcal C(\mathcal R_{i})|>qn-(1-\varepsilon)qn=\varepsilon qn.\label{eq1-25-31}
\end{align}
After the $(i-1)$-th iteration, $OPT$ is a sub-collection fulfilling the remaining covering requirement $n_{i-1}$. So the density of an optimal solution $\mathcal R^*_i$ to the MDSC problem in the $i$-th iteration is upper bounded by $opt/n_{i-1}$. Since $\mathcal R_i$ approximates the density of $\mathcal R^*_i$ within a factor of $\alpha$, we have
\begin{equation}\label{eq1-26-1}
\frac{c(\mathcal R_i)}{n_{i-1}-n_i}\leq \alpha\frac{opt}{n_{i-1}}.
\end{equation}
Combining this with inequalities \eqref{eq1-25-3}, \eqref{eq1-25-31}
we have
\begin{align*}
 & \sum_{i=1}^{t-1}c(\mathcal R_i)
\leq \alpha\cdot opt \sum_{i=1}^{t-1}\frac{n_{i-1}-n_i}{n_{i-1}}\leq \alpha\cdot opt \sum_{i=1}^{t-1}\int_{n_i}^{n_{i-1}}\frac{1}{x}dx \\
= & \alpha\cdot opt \int_{n_{t-1}}^{n_0}\frac{1}{x}dx=\alpha \ln\left(\frac{n_{0}}{n_{t-1}}\right)opt \leq \alpha \ln\left(\frac{1}{\varepsilon}\right) opt.
\end{align*}
The lemma is proved.
\end{proof}

\begin{lemma}\label{lem16-3-11-2}
$c(\mathcal R_t)\leq \alpha\left(1+\frac{1-q}{\varepsilon q}\right) opt$.
\end{lemma}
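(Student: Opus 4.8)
The plan is to recycle the per-iteration density guarantee used in the proof of Lemma~\ref{lem16-3-11-1}, but to replace the harmonic-sum telescoping (which is meaningless for a single term) by a direct bound on how many elements the final sub-collection can newly cover. First I would note that at the start of the $t$-th iteration $OPT$ remains a feasible solution of the reduced instance, since it fully covers all the $n_{t-1}$ still-required elements; hence the optimal MDSC density in that iteration is at most $opt/n_{t-1}$. Applying the $\alpha$-approximation guarantee then gives, exactly in analogy with~\eqref{eq1-26-1},
\[
\frac{c(\mathcal R_t)}{n_{t-1}-n_t}\leq \alpha\,\frac{opt}{n_{t-1}},
\qquad\text{so}\qquad
c(\mathcal R_t)\leq \alpha\cdot opt\cdot\frac{n_{t-1}-n_t}{n_{t-1}}.
\]

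The difficulty — and the reason this last term is treated separately — is that $n_{t-1}-n_t=|\mathcal C(\mathcal R_t)|$ need not be comparable to $n_{t-1}$: the sub-collection returned in the last iteration may fully cover far more elements than the $n_{t-1}$ that are still needed, so the ratio $(n_{t-1}-n_t)/n_{t-1}$ cannot be folded into a logarithm as before. The key step to control this over-coverage is to bound $|\mathcal C(\mathcal R_t)|$ not by the residual requirement but by the size of the element pool that is still alive at that stage. Since $\mathcal C(\mathcal R_t)$ can only contain elements of the reduced ground set $E'$, we have $|\mathcal C(\mathcal R_t)|\le |E'|$; and because exactly $qn-n_{t-1}$ elements have been fully covered during the first $t-1$ iterations, the surviving set has size $|E'|=n-(qn-n_{t-1})=(1-q)n+n_{t-1}$, whence $n_{t-1}-n_t\le (1-q)n+n_{t-1}$.

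Substituting this into the displayed inequality yields $c(\mathcal R_t)\leq \alpha\cdot opt\bigl(1+\tfrac{(1-q)n}{n_{t-1}}\bigr)$, and invoking $n_{t-1}>\varepsilon qn$ from~\eqref{eq1-25-31} bounds $\tfrac{(1-q)n}{n_{t-1}}<\tfrac{1-q}{\varepsilon q}$, which gives precisely the asserted estimate. I expect the only real obstacle to be the middle step: recognizing that the correct ceiling on the over-coverage is the cardinality $(1-q)n+n_{t-1}$ of the remaining ground set rather than the shrinking requirement $n_{t-1}$, together with the bookkeeping that verifies exactly $qn-n_{t-1}$ elements were fully covered before the final iteration (which is consistent since $n_{t-1}>\varepsilon qn>0$ guarantees no premature overshoot). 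Beyond that, the argument is the same density estimate already established for the earlier iterations.
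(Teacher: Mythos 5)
Your proof is correct and follows essentially the same route as the paper's: you invoke the same per-iteration density guarantee $c(\mathcal R_t)/|\mathcal C(\mathcal R_t)|\leq \alpha\, opt/n_{t-1}$, bound the over-coverage $|\mathcal C(\mathcal R_t)|$ by the size $(1-q)n+n_{t-1}$ of the surviving element pool, and finish with $n_{t-1}>\varepsilon qn$ from \eqref{eq1-25-31}. The only cosmetic difference is that you first state the guarantee with the actual coverage $n_{t-1}-n_t$ and then bound it, whereas the paper substitutes the worst-case coverage directly into the denominator.
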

\begin{proof}

For the last sub-collection $R_t$, notice that in a worst case, it may fully cover all the remaining elements, the number of which is $n-(qn-n_{t-1})=(1-q)n+n_{t-1}$. Compared with the density of the optimal sub-collection and similar to the derivation of \eqref{eq1-26-1},
$$
\frac{c(\mathcal R_t)}{(1-q)n+n_{t-1}}\leq \alpha\frac{opt}{n_{t-1}}.
$$
Combining this with inequality \eqref{eq1-25-31},
$$
c(\mathcal R_t) \leq \alpha\frac{(1-q)n+n_{t-1}}{n_{t-1}} opt< \alpha\left(1+\frac{1-q}{\varepsilon q}\right)opt.
$$
The lemma is proved.
\end{proof}

Since the output of Algorithm \ref{main} is $\mathcal R_1\cup\cdots\cup\mathcal R_{t-1}\cup \mathcal R_{t}$, the performance ratio follows from Lemma \ref{lem16-3-11-1} and Lemma \ref{lem16-3-11-2}.

\begin{theorem}\label{th1}
Implementing an $\alpha$-approximation algorithm for MDSC, the PSMC problem admits an $\left(\alpha\left(1+\ln(\frac{1}{\varepsilon})+\frac{1-q}{\varepsilon q}\right),1-\varepsilon\right)$-bicriteria approximation.
\end{theorem}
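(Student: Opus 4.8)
The plan is to assemble the theorem directly from the two cost estimates already proved, namely Lemma \ref{lem16-3-11-1} and Lemma \ref{lem16-3-11-2}, together with the termination condition of Algorithm \ref{main}. Since the output is $\mathcal F=\mathcal R_1\cup\cdots\cup\mathcal R_{t-1}\cup\mathcal R_t$, and cost is subadditive under union (sets selected in different iterations may overlap, so I would only claim $c(\mathcal F)\leq\sum_{i=1}^t c(\mathcal R_i)$ rather than equality), the first step is to split the total cost as
$$
c(\mathcal F)\leq \sum_{i=1}^{t-1}c(\mathcal R_i)+c(\mathcal R_t).
$$
This is exactly the decomposition for which the two lemmas were tailored: the first group of iterations is controlled by a telescoping harmonic sum, while the last iteration is handled separately precisely because it may over-cover and hence cannot be folded into the same integral bound.

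Next I would substitute the two bounds. Lemma \ref{lem16-3-11-1} gives $\sum_{i=1}^{t-1}c(\mathcal R_i)\leq \alpha\ln(\tfrac{1}{\varepsilon})\,opt$, and Lemma \ref{lem16-3-11-2} gives $c(\mathcal R_t)\leq \alpha\bigl(1+\tfrac{1-q}{\varepsilon q}\bigr)\,opt$. Adding these yields
$$
c(\mathcal F)\leq \alpha\ln\!\left(\tfrac{1}{\varepsilon}\right)opt+\alpha\left(1+\tfrac{1-q}{\varepsilon q}\right)opt=\alpha\left(1+\ln\!\left(\tfrac{1}{\varepsilon}\right)+\tfrac{1-q}{\varepsilon q}\right)opt,
$$
which is the claimed cost guarantee. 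No further optimization is needed here; the factor is simply the sum of the two separately established contributions.

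It then remains to verify the coverage criterion, the second coordinate of the bicriteria guarantee. This follows from the stopping rule of the \textbf{while} loop: the loop runs exactly while $q'>\varepsilon q$, so upon termination $q'\leq \varepsilon q$. Reading off the definition of the reduced covering ratio in \eqref{eq1-25-2}, namely $q'=(qn-|\mathcal C(\mathcal F)|)/n$, the inequality $q'\leq\varepsilon q$ rearranges to $|\mathcal C(\mathcal F)|\geq qn-\varepsilon qn=(1-\varepsilon)qn$, so at least $(1-\varepsilon)qn$ elements are fully covered, as required.

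Honestly, there is no genuine obstacle left at this stage: the substantive work lives in the two lemmas, and the theorem is essentially their conjunction. The only points deserving a word of care are the subadditivity step $c(\mathcal F)\leq\sum_i c(\mathcal R_i)$ (to be safe about overlapping selections across iterations) and the implicit finiteness of $t$, which is guaranteed because each iteration fully covers at least one new element — indeed $|\mathcal C(\mathcal R_i)|=n_{i-1}-n_i\geq 1$ as used in Lemma \ref{lem16-3-11-1} — so the loop terminates after at most $qn$ iterations and the decomposition above is well defined.
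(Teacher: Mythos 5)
Your proposal is correct and is essentially identical to the paper's argument: the paper likewise obtains the cost bound by summing Lemma \ref{lem16-3-11-1} and Lemma \ref{lem16-3-11-2} over the decomposition $\mathcal F=\mathcal R_1\cup\cdots\cup\mathcal R_t$, and obtains the coverage guarantee from the termination condition $q'\leq\varepsilon q$ together with \eqref{eq1-25-2}. Your extra remarks on subadditivity of the cost and finiteness of $t$ are sensible bookkeeping that the paper leaves implicit.
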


For small $\varepsilon$, the performance ratio in the above theorem can be viewed as $O(\frac{\alpha}{\varepsilon})$ since $q$ is a constant.

\section{Approximation Algorithm for MDSC}\label{secMDSC}

In this section, we present an approximation algorithm for MDSC. The algorithm is based on an LP formulation and makes use of a node weighted Steiner network algorithm.

\vskip 0.2cm\subsection{LP-Formulation}

The following is a natural formulation of integer program for MDSC.
\begin{align}\label{eq12-22-1}
\min\ \frac{\sum_{S\in\mathcal S}c_Sx_S}{\sum_{e\in E}y_e} &  \nonumber\\
 s.t.\    \sum_{S:\ e\in S}x_S  & \geq r_{e}y_e, \ \mbox{for any}\ e\in E\\
   x_S \in  \{0,1\} & \ \ \mbox{for}\ S\in\mathcal S\nonumber\\
   y_e \in  \{0,1\} & \ \ \mbox{for}\ e\in E\nonumber
\end{align}
Here $x_S$ indicates whether set $S$ is selected and $y_e$ indicates whether element $e$ is fully covered. The first constrained says that if $y_e=1$ then at least $r_e$ sets containing $e$ must be selected and thus $e$ is fully covered. Relaxing \eqref{eq12-22-1} and by a scaling, we have the following linear program:
\begin{align}\label{eq12-22-2}
\min\ \sum_{S\in\mathcal S}c_Sx_S &   \nonumber\\
\ \ s.t. \ \sum_{e\in E}y_e=1 & \nonumber \\
    \sum_{S:\ e\in S}x_S  & \geq r_{e}y_e, \ \mbox{for any}\ e\in E\\
  1\geq x_S& \geq 0\ \mbox{for}\ S\in\mathcal S\nonumber\\
  1\geq y_e& \geq 0\ \mbox{for}\ e\in E\nonumber
\end{align}
However, the following example shows that the integrality gap between \eqref{eq12-22-1} and \eqref{eq12-22-2} can be arbitrarily large.

\begin{example}\label{ex1}
{\rm Let $E=\{e_1,e_2\}$, $\mathcal S=\{S_1,S_2,S_3\}$ with $S_1=\{e_1\}$, $S_2=\{e_2\}$, $S_3=\{e_1,e_2\}$, $c(S_1)=c(S_2)=1$, $c(S_3)=M$ where $M$ is a large positive number, and $r(e_1)=r(e_2)=2$. Then $x_{S_1}=x_{S_2}=1$, $x_{S_3}=0$, $y_{e_1}=y_{e_2}=1/2$ form a feasible solution to \eqref{eq12-22-2} with objective value 2. While among all integral feasible solutions to \eqref{eq12-22-1}, which are $\{S_1,S_3\}$, $\{S_2,S_3\}$, $\{S_1,S_2,S_3\}$, whose densities are $M+1$, $M+1$, and $(2+M)/2$, respectively, the minimum density is $(2+M)/2$.}  \end{example}

Hence, to obtain a good approximation, we need to find another program. In the following, we formulate the problem in an more involved flow-like language. For an element $e$, an {\em $r_e$-cover-set} is a sub-collection $\mathcal S'$ with $|\mathcal S'|=r_e$ which fully covers $e$. Denote by $\Omega_e$ the family of all $r_e$-cover-sets, and $\Omega=\bigcup_{e\in E}\Omega_e$. Consider the following example.

\begin{example}\label{eg5-10-1}
{\rm $E=\{e_1,e_2,e_3\}$. $\mathcal S=\{S_1,S_2,S_3\}$ with $S_1=\{e_1,e_2,e_3\}$, $S_2=\{e_1\}$ and $S_3=\{e_1,e_3\}$, and $r(e_1)=2$ and $r(e_2)=r(e_3)=1$. For this example, $\Omega_{e_1}=\{\{S_1,S_2\},\{S_1,S_3\},\{S_2,S_3\}\}$, $\Omega_{e_2}=\{\{S_1\}\}$ and
$\Omega_{e_3}=\{\{S_1\},\{S_3\}\}$. It should be emphasized that a same cover-set belonging to different $\Omega_e$'s will be viewed as different cover-sets. For example, $\{S_1\}$ belongs to both $\Omega_{e_2}$ and $\Omega_{e_3}$. To distinguish them, we shall use $Q_j^{(e_i)}$ to denote cover-sets in $\Omega_{e_i}$. For example, $\Omega_{e_1}$ contains three $r_{e_1}$-cover-sets $Q_1^{(e_1)}=\{S_1,S_2\}$, $Q_2^{(e_1)}=\{S_1,S_3\}$ and $Q_3^{(e_1)}=\{S_2,S_3\}$, $\Omega_{e_2}$ contains one $r_{e_2}$-cove-set $Q_1^{(e_2)}=\{S_1\}$, $\Omega_{e_3}$ contains two $r_{e_3}$-cover-sets $Q_1^{(e_3)}=\{S_1\}$ and $Q_2^{(e_3)}=\{S_3\}$.}
\end{example}

The following is an integer program for constrained MDSC:
\begin{align}
\min &\ \frac{\sum_{S\in\mathcal S}c_Sx_S}{\sum_{e\in E}y_e}\label{eq11}\\
s.t. &\ \sum_{ Q: Q\in \Omega_e} l_Q\geq y_{e} \ \mbox{for}\ e\in E\nonumber\\
&\ x_S\geq \sum_{Q:S\in Q\in \Omega_e} l_Q \ \mbox{for}\ e\in E,S\in\mathcal S\nonumber\\
&\ x_S\in\{0,1\},\ \mbox{for}\ S\in\mathcal S\nonumber\\
&\ y_e\in\{0,1\},\ \mbox{for}\ e\in E\nonumber\\
&\ l_Q\in\{0,1\},\ \mbox{for every}\ Q\in \Omega_e\ \mbox{for some}\ e\in E\nonumber
\end{align}
In fact, $l_Q$ indicates whether a cover-set $Q$ is selected and $x_S$ indicates whether set $S$ is selected. The first constraint says that if $y_e=1$ then at least one $r_e$-cover-set is selected and thus $e$ is fully covered. The family of selected sets is the union of all those selected cover-sets. So, if $S$ belongs to some selected cover-set, then $x_S$ should be $1$, namely,
\begin{equation}\label{eq0902-1}
x_S\geq\max\{l_Q\colon S\in Q\in \Omega\}.
\end{equation}
Notice that to fully cover element $e$, it is sufficient to select exactly one $r_e$-cover-set from $\Omega_e$. So, we may replace \eqref{eq0902-1} by the second constraint of \eqref{eq11} for the purpose of linearization. The object function is exactly the density of selected sets.

Consider Example \ref{eg5-10-1} again. Setting $l_{Q_2^{(e_1)}}=l_{Q_1^{(e_3)}}=1$ and all other $l$-values to be 0 implies that the selected sub-collection $\mathcal S'=Q_2^{(e_1)}\cup Q_1^{(e_3)}=\{S_1,S_3\}$ and $e_1,e_3$ are fully covered. By the second constraint, $x_{S_1}=x_{S_3}=1$ and we may take $x_{S_2}=0$ (to minimize the objective function, it is better to take $x_S$ to be 0 if the right hand side of the second constraint is 0). By the first constraint, $y_{e_2}=0$ and we may take $y_{e_1}=y_{e_2}=1$ (to minimize the objective function, it is better to take $y_e$ to be 1 for all those elements $e$ which are fully covered). Notice that $\{S_1\}$ serves as both $Q_1^{(e_2)}$ and $Q_1^{(e_3)}$, the $l$-value for the former is 0 and the $l$-value for the latter is $1$, they are set independently.

The above integer program \eqref{eq11} can be relaxed to the following linear program LP$_1$:
\begin{align}
\min &\ \sum_{S\in\mathcal S}c_Sx_S\label{eq12}\\
s.t. &\ \sum_{e\in E}y_e=1\nonumber\\
&\ \sum_{ Q: Q\in \Omega_e} l_Q\geq y_{e} \ \mbox{for}\ e\in E\nonumber\\
&\ x_S\geq \sum_{Q:S\in Q\in \Omega_e} l_Q \ \mbox{for}\ e\in E, S\in\mathcal S\nonumber\\
&\ x_S\geq 0\ \mbox{for}\ S\in\mathcal S\nonumber\\
&\ y_e\geq 0\ \mbox{for}\ e\in E\nonumber\\
&\ l_Q\geq 0\ \mbox{for}\ Q\in \Omega_e \ \mbox{for some}\ e\in E.\nonumber
\end{align}

It should be noticed that although there is exponential number of variables, the linear program can be solved in polynomial time, the detail of which is presented as follows. Consider the dual program of \eqref{eq12}:
\begin{align}\label{eq11-22-6}
\max\ & a  \nonumber\\
s.t.\ & a-f_e\leq 0  \ \mbox{for}\ e\in E\\
 &    d_{S_e} \leq c_S,  \ \mbox{for}\ e\in S \in \mathcal S,\ e\in E \nonumber\\
 & f_e \leq \sum_{S: e\in S\in Q}d_{S_e}, \ \mbox{for} \ Q \in \Omega_{e}, \ e\in E \nonumber\\
 & f_e\geq 0\ \ \mbox{for}\ e\in E\nonumber\\
 & d_{S_e} \geq 0 \ \mbox{for}\ e \in E,\ S\in \mathcal S.\nonumber
\end{align}
By LP primal-dual theory \cite{Ignizio}, one may solve \eqref{eq12} through solving \eqref{eq11-22-6}, and to solve \eqref{eq11-22-6} in polynomial time, it suffices to construct a separation oracle for the third constraint. For any $e\in E$ and $Q \in \Omega_{e}$, define $g(e,Q)=\sum_{S: e\in S\in Q}d_{S_e}$. For any element $e\in E$, a cover-set $Q_{\min}(e)$ minimizing $g(e,Q)$ can be found by choosing the $r_e$ cheapest (measured by cost $d$) sets containing $e$. By checking whether $g(e,Q_{\min}(e))\geq f_e$ holds for every $e\in E$, we can either claim the validity of the constraints or find out a violated constraint. Using ellipsoid method, linear program \eqref{eq11-22-6} is polynomial-time solvable.

\begin{lemma}\label{lem16-3-10}
The optimal value of linear program \eqref{eq12}, denoted as $opt_{LP_1}$, satisfies $opt_{LP_1}\leq opt_{MDSC}$, where $opt_{MDSC}$ is the optimal value for integer linear program \eqref{eq11}.
\end{lemma}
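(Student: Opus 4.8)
The plan is to exhibit the linear program \eqref{eq12} as a scaled relaxation of the integer program \eqref{eq11}, so that every integral solution induces a feasible LP solution of no greater objective value. First I would take an optimal solution $(x^*,y^*,l^*)$ of \eqref{eq11}, achieving density $opt_{MDSC}$, and let $k=\sum_{e\in E}y_e^*$ denote the number of elements it fully covers; since the density is finite we have $k\geq 1$. I then define a candidate solution for \eqref{eq12} by uniform scaling: $\hat{x}_S=x_S^*/k$, $\hat{y}_e=y_e^*/k$, and $\hat{l}_Q=l_Q^*/k$.

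Next I would verify feasibility constraint by constraint. The normalization constraint holds by construction, since $\sum_{e\in E}\hat{y}_e=\frac{1}{k}\sum_{e\in E}y_e^*=1$. The remaining structural constraints of \eqref{eq11} --- namely $\sum_{Q\in\Omega_e}l_Q^*\geq y_e^*$ and $x_S^*\geq\sum_{Q:S\in Q\in\Omega_e}l_Q^*$ --- are homogeneous: both sides carry the same factor, so dividing through by $k$ preserves each inequality, giving $\sum_{Q\in\Omega_e}\hat{l}_Q\geq\hat{y}_e$ and $\hat{x}_S\geq\sum_{Q:S\in Q\in\Omega_e}\hat{l}_Q$. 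Nonnegativity of $\hat{x},\hat{y},\hat{l}$ is immediate because $k>0$. Hence $(\hat{x},\hat{y},\hat{l})$ is feasible for \eqref{eq12}.

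Finally I would compare objective values. The objective of \eqref{eq12} evaluated at the scaled point equals $\sum_{S\in\mathcal S}c_S\hat{x}_S=\frac{1}{k}\sum_{S\in\mathcal S}c_Sx_S^*=\frac{\sum_{S}c_Sx_S^*}{\sum_{e}y_e^*}$, which is exactly the density $opt_{MDSC}$ of the integral solution. Since \eqref{eq12} is a minimization over a feasible region containing this point, its optimum $opt_{LP_1}$ can only be smaller, so $opt_{LP_1}\leq opt_{MDSC}$.

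The argument presents no genuine obstacle; the only point deserving care is the role of the normalization constraint $\sum_{e\in E}y_e=1$, which is precisely the device that converts the fractional (density) objective of \eqref{eq11} into the purely linear cost objective of \eqref{eq12}. I would also confirm that the optimal integral solution covers at least one element, so that $k\geq 1$ and the scaling is well defined; this is guaranteed because an MDSC instance with finite optimal density must fully cover some element.
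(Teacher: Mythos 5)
Your proof is correct and follows essentially the same route as the paper: both scale an optimal integral solution of \eqref{eq11} by $P^*=\sum_{e\in E}y_e^*$ to obtain a feasible point of \eqref{eq12} whose objective equals the optimal density. You simply spell out the feasibility check and the $k\geq 1$ point more explicitly than the paper does.
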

\begin{proof} Let $(x^*,y^*,Q^*)$ be an optimal solution to \eqref{eq11}. Suppose $\sum_{e\in E}y_e^*=P^*$. Then $(x^*/P^*,y^*/P^*,Q^*/P^*)$ is a feasible solution to \eqref{eq12}. Hence $opt_{LP_1}\leq \sum_{S\in\mathcal S}c_S(x_S^*/P^*)=\sum_{S\in\mathcal S}c_Sx_S^*/\sum_{e\in E}y_e^*$.
\end{proof}

\vskip 0.2cm \subsection{The Algorithm}
Inspired by the method of paper \cite{Chekuri1} for network design problems, we design an approximation algorithm for MDSC which makes use of an approximation algorithm for the minimum node weighted Steiner network problem.

\begin{definition}[Node Weighted Steiner Network Problem (NWSN) \cite{Nutov}]
{\rm Given a graph $G=(V,E)$ with a weight function $c$ on $V$ and a connectivity requirement $r_{s,t}$ for each pair of nodes $(s,t)$, the minimum node weighted Steiner network problem asks for a subgraph $H$ such that every pair of nodes $(s,t)$ are connected by at least $r_{s,t}$ edge-disjoint paths in $H$ and the node weight of $H$ is as small as possible.}
\end{definition}

Notice that $H$ must include all those nodes $s$ with $r_{s,t}\neq 0$ for at least one node $t$. Such a node $s$ can be viewed as a {\em terminal node}. On the other hand, those nodes $s$ with $r_{s,t}=0$ for any $t\neq s$ need not be included in $H$. Such nodes are {\em Steiner nodes}. The NWSN problem is to select a set of Steiner nodes with the minimum weight to satisfy those connectivity requirements between terminal nodes.

The algorithm is presented in Algorithm \ref{algo5-11-1}. It partitions elements into disjoint union of sets $Y_i$'s, according to an optimal fractional solution to linear program \eqref{eq12}. Let $Y_{i_0}$ be a set satisfying the condition specified by line 3 of the algorithm, whose existence will be shown later. The NWSN instance used in line 4 of the algorithm is constructed in the following way. Let $H$ be the graph on node set $Y_{i_0}\cup \mathcal S\cup \{s\}$ and edge set $\{eS\colon e\in Y_{i_0},S\in\mathcal S,e\in S\}\cup\{sS\colon S\in \mathcal S\}$. Set the weight on every $S\in \mathcal S$ to be $c_S$ and the weight on all other nodes to be zero. Set the connectivity requirement $r_{s,e}=r_e$ for every $e\in Y_{i_0}$ and the connectivity requirement on all other node pairs to be zero. Denote the constructed instance as $(H,c,r)$.  The output of the algorithm is the sub-collection of sets corresponding to those Steiner nodes in the calculated Steiner network on $(H,c,r)$.

\begin{algorithm}[ht!]
\caption{\textbf{{\sc Algorithm For constrained MDSC}}} \label{algo5-11-1}
{\bf Input:} An MDSC instance $(E,\mathcal S,c,r)$

{\bf Output:} A sub-collection $\mathcal S'$.
\begin{algorithmic}[1]
\State Find an optimal (fractional) solution $(x^f,y^f,l^f)$ to linear program \eqref{eq12}.
\State Let $Y_i=\{e\in E\colon 2^{-(i+1)}<y^f_e\leq 2^{-i}\}$ for $0\leq i\leq I-1$ and $Y_I=\{e\in E\colon y^f_e\leq 2^{-I}\}$, where $I=2\lfloor\log n\rfloor-1$.
\State Let $i_0$ be an index such that $|Y_{i_0}|\geq 2^{i_0}/(I+1)$.\label{line3}
\State Find an approximation solution $H'$ to NWSN on auxiliary instance $(H,c,r)$.
\State Output $\mathcal S'=V(H')\setminus\{Y_{i_0}\cup s\}$.
\end{algorithmic}
\end{algorithm}

The rationale behind the algorithm will be manifested through the analysis in the following subsection.

\vskip 0.2cm \subsection{Theoretical Analysis}

Notice that any feasible solution to the NWSN problem on instance $(H,c,r)$ induces a feasible solution to the multi-cover problem on instance $(Y_{i_0},S,c,r)$. In fact, suppose element $e\in Y_{i_0}$ is connected to node $s$ by $r_{s,e}=r_e$ edge-disjoint paths which has the form of $\{sS_ie\}_{i=1}^{r_e}$, then $\{S_i\}_{i=1}^{r_e}$ fully covers element $e$. Taking the union of such sets will fully cover all elements in $Y_{i_0}$.

To analyze the correctness and the performance ratio, we first give an LP-relaxation for the set multi-cover problem and an LP-relaxation for the NWSN problem.

\vskip 0.2cm {\bf LP-relaxation for set multi-cover.} Similar to the construction of integer program \eqref{eq11}, the multi-cover problem on instance $(Y_{i_0},\mathcal S,c,r)$ can be formulated as an integer linear program whose relaxation is as follows:
\begin{align}
\min &\ \sum_{S\in\mathcal S}c_Sx_S\label{eq5-10-2}\\
s.t. &\ \sum_{ Q: Q\in \Omega_e} l_Q\geq 1 \ \mbox{for}\ e\in Y_{i_0}\nonumber\\
&\ x_S\geq \sum_{Q:S\in Q\in \Omega_e} l_Q \ \mbox{for}\ e\in Y_{i_0},\  S\in\mathcal S\nonumber\\
&\ x_S\geq 0\ \mbox{for}\ S\in\mathcal S\nonumber\\
&\ l_Q\geq 0\ \mbox{for}\ \mbox{for}\ Q\in \Omega_e \ \mbox{for some}\ e\in E\nonumber
\end{align}


\vskip 0.2cm{\bf LP-relaxation for NWSN.} Next, consider the node-weighted Steiner network problem. For each pair of nodes $s$ and $t$, an {\em $r_{s,t}$-path-set} is a set of $r_{s,t}$ edge-disjoint $(s,t)$-paths in $G$. Denote by $\mathcal P_{s,t}$ the family of all $r_{s,t}$-path-sets and let $\mathcal P$ be the union of all these families. The following linear program LP$_3$ is a relaxation for the NWSN problem which was presented in \cite{Chekuri}:
\begin{align}
\min &\ \sum_{v\in V}c_vx_v\label{eq5-11-2}\\
s.t. &\ \sum_{ P: P\in \mathcal P_{s,t}} l_P\geq 1 \ \mbox{for}\ s,t\in V\nonumber\\
&\ x_v\geq \sum_{P:v\in P\in \mathcal P_{s,t}} l_P \ \mbox{for}\ v\ \mbox{and}\ s,t\in V\nonumber\\
&\ x_v\geq 0\ \mbox{for}\ v\in V\nonumber\\
&\ l_P\geq 0\ \mbox{for}\ P\in \mathcal P\nonumber
\end{align}
In fact, for the corresponding integral formulation in which $l_P$ and $x_v$ can only take values from $\{0,1\}$, $l_P$ indicates whether path-set $P$ is chosen and $x_v$ indicates whether node $v$ is chosen. The model in \cite{Chekuri} uses equality instead of inequality in the first constraint, whose meaning is that for each pair of nodes $s$ and $t$, exactly one $r_{s,t}$-path-set is chosen, and thus the connectivity requirement between $s$ and $t$ is satisfied. The second constraint says that if node $v$ belongs to some chosen path-set, then $v$ must be chosen. Hence the chosen nodes are those nodes on the union of chosen path-sets, and the objective is to minimize the weight of those chosen nodes. When relaxing variables by allowing fractional values, any optimal solution automatically has $l_P\leq 1$, $x_v\leq 1$, and $\sum_{ P: P\in \mathcal P_{s,t}} l_P=1$. Hence it does not matter if we relax the first constraint to be inequality and do not explicitly require $x_S$ and $l_P$ to be no greater than 1.

Now, we are ready to analyze the performance ratio of Algorithm \ref{algo5-11-1}.

\begin{theorem}\label{thm5-11-3}
For $n\geq 32$, Algorithm \ref{algo5-11-1} has performance ratio at most $O(r_{\max}(\log n)^2)$ for constrained MDSC.
\end{theorem}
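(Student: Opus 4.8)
The plan is to charge the density of the output $\mathcal S'$ against $opt_{MDSC}$ by routing the fractional optimum of \eqref{eq12} through the bucket $Y_{i_0}$ and then through the node-weighted Steiner network relaxation. Write $den(\mathcal S')=c(\mathcal S')/|\mathcal C(\mathcal S')|$. Since any feasible NWSN solution on $(H,c,r)$ fully covers every $e\in Y_{i_0}$ (the $r_e$ edge-disjoint $s$--$e$ paths of the form $sSe$ exhibit $r_e$ distinct sets containing $e$), the output satisfies $\mathcal C(\mathcal S')\supseteq Y_{i_0}$, hence $|\mathcal C(\mathcal S')|\ge |Y_{i_0}|$. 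Thus it suffices to (i) lower bound $|Y_{i_0}|$ and (ii) upper bound $c(\mathcal S')$ in terms of $2^{i_0}\cdot opt_{MDSC}$, after which the two powers of $2^{i_0}$ cancel.

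First I would justify line \ref{line3}, i.e.\ the existence of $i_0$. Using the normalization $\sum_{e\in E}y^f_e=1$ from \eqref{eq12} together with the partition $\{Y_i\}_{i=0}^I$, each $e\in Y_i$ with $i\le I-1$ contributes $y^f_e\le 2^{-i}$, while the tail $Y_I$ contributes at most $n2^{-I}$. If one assumed for contradiction that $|Y_i|<2^i/(I+1)$ for every $i\le I-1$, then $1=\sum_e y^f_e< n2^{-I}+\sum_{i=0}^{I-1}|Y_i|2^{-i}< n2^{-I}+\tfrac{I}{I+1}$, forcing $n(I+1)>2^I$. Substituting $I=2\lfloor\log n\rfloor-1$, so that $2^I=\tfrac12(2^{\lfloor\log n\rfloor})^2\ge n^2/8$, I would argue that this is untenable for $n\ge 32$, which is exactly the role of the hypothesis; hence a valid index $i_0\le I-1$ with $|Y_{i_0}|\ge 2^{i_0}/(I+1)$ exists.

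Next I would transfer the fractional optimum of \eqref{eq12} to a cheap fractional multi-cover of $Y_{i_0}$. For $e\in Y_{i_0}$ we have $y^f_e>2^{-(i_0+1)}$, so scaling the $(x^f,l^f)$ part by $2^{i_0+1}$ gives $\sum_{Q\in\Omega_e}2^{i_0+1}l^f_Q\ge 2^{i_0+1}y^f_e>1$ while preserving the second constraint; hence $(2^{i_0+1}x^f,2^{i_0+1}l^f)$ is feasible for the set multi-cover relaxation \eqref{eq5-10-2} on $(Y_{i_0},\mathcal S,c,r)$ with objective $2^{i_0+1}\,opt_{LP_1}\le 2^{i_0+1}\,opt_{MDSC}$, the last inequality by Lemma \ref{lem16-3-10}. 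I would then identify \eqref{eq5-10-2} with the NWSN relaxation \eqref{eq5-11-2} on $(H,c,r)$: sending each $r_e$-cover-set to the corresponding $r_e$-path-set $\{sSe\}$ and each set $S$ to its Steiner node is a bijection preserving feasibility and cost (the weight $c_S$ sits only on the $S$-nodes). So the NWSN relaxation on $(H,c,r)$ has optimum at most $2^{i_0+1}\,opt_{MDSC}$, and feeding $(H,c,r)$ to the node-weighted Steiner network algorithm of \cite{Nutov}, whose guarantee relative to \eqref{eq5-11-2} is $O(r_{\max}\log n)$ (cf.\ the relaxation of \cite{Chekuri}), returns $H'$ with $c(\mathcal S')=c(V(H')\setminus\{Y_{i_0}\cup s\})\le O(r_{\max}\log n)\cdot 2^{i_0+1}\,opt_{MDSC}$. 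Combining with $|Y_{i_0}|\ge 2^{i_0}/(I+1)$ yields
\[
den(\mathcal S')=\frac{c(\mathcal S')}{|\mathcal C(\mathcal S')|}\le \frac{O(r_{\max}\log n)\,2^{i_0+1}\,opt_{MDSC}}{2^{i_0}/(I+1)}=O\big(r_{\max}\log n\big)\cdot 2(I+1)\,opt_{MDSC},
\]
and since $I+1=2\lfloor\log n\rfloor=O(\log n)$ the ratio is $O(r_{\max}(\log n)^2)$, as claimed.

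The main obstacle will be twofold. The heart of the argument is pinning the NWSN guarantee as exactly $O(r_{\max}\log n)$ relative to \eqref{eq5-11-2} and verifying that the cover-set/path-set correspondence is an exact bijection rather than a mere relation — in particular that edge-disjointness of the $s$--$e$ paths coincides precisely with distinctness of the chosen sets, so that \eqref{eq5-10-2} and \eqref{eq5-11-2} have the same optimum. The second delicate point is the counting for $i_0$: the crude tail bound $n2^{-I}$ only forces the contradiction above a threshold, so reaching exactly $n\ge 32$ requires a more careful accounting of $Y_I$ (and hand-checking the boundary values) rather than reliance on the asymptotics.
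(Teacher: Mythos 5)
Your proposal follows the paper's proof essentially step for step: the pigeonhole choice of $i_0$ with $|Y_{i_0}|\ge 2^{i_0}/(I+1)$ and $i_0\le I-1$, the $2^{i_0+1}$ scaling of $(x^f,l^f)$ into the multi-cover relaxation \eqref{eq5-10-2}, the cover-set--to--path-set map into the NWSN relaxation \eqref{eq5-11-2}, the $O(r_{\max}\log n)$ integrality-gap bound of \cite{Chekuri}, and division by $|Y_{i_0}|$. The only cosmetic differences are that the paper verifies $i_0\le I-1$ by checking $n2^{-I}<1/(I+1)$ directly rather than via your looser $2^I\ge n^2/8$ contradiction, and that only the one-directional inequality $opt_{LP_3}\le opt_{LP_2}$ is needed (each cover-set yields a path-set), not the exact bijection you flag as a worry.
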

\begin{proof}
We prove the theorem step by step by first establishing the following three claims.

\vskip 0.2cm {\bf Claim 1.} An index $i_0$ as in Line 3 of Algorithm \ref{algo5-11-1} exists and $i_0\leq I-1$.

In fact, since $E$ is decomposed into $I+1$ parts $Y_0\cup Y_1\cup\cdots\cup Y_I$, by the constraint $\sum_{e\in E}y_e^f=1$, there exists an index $i_0$ such that $\sum_{e\in Y_{i_0}}y_e^f\geq 1/(I+1)$. Since $y_e^f\leq 2^{-i_0}$ for every $e\in Y_{i_0}$, we have $|Y_{i_0}|\geq 2^{i_0}/(I+1)$.

Since $I=2\lfloor \log n\rfloor -1$ and $y_e^f\leq 2^{-I}$ for each $e\in Y_I$, it can be calculated that $\sum_{e\in Y_I}y_e^f\leq n2^{-I}<1/(I+1)$ for $n\geq 32$. Hence the above $i_0\leq I-1$.

\vskip 0.2cm {\bf Claim 2.} $opt_{LP_2}\leq 2^{i_0+1}opt_{MDSC}$, where $opt_{LP_2}$ is the optimal value of linear program \eqref{eq5-10-2}.

Let $\hat x_S=2^{i_0+1}x_S^f$ for each set $S$ and let $\hat l_Q=2^{i_0+1} {l_Q}^f$ for each cover-set $Q$. For any element $e\in Y_{i_0}$ and any set $S\in\mathcal S$, we have
$$
\hat x_S=2^{i_0+1}x_S^f \geq 2^{i_0+1}\sum_{Q:S\in Q\in \Omega_e} {l_Q}^f=\sum_{Q:S\in Q\in \Omega_e}\hat l_Q.
$$
Since $i_0\leq I-1$, we have $y_e^f\geq 2^{-(i_0+1)}$ for every $e\in Y_{i_0}$. Hence
\begin{eqnarray*}
 & & \sum_{ Q: Q\in \Omega_e} \hat l_Q=2^{(i_0+1)}\sum_{ Q: Q\in \Omega_e} {l_Q}^f\geq 2^{(i_0+1)}y_e^f \geq 1\ \\
 & & \mbox{for every}\ e\in Y_{i_0}.
\end{eqnarray*}
This implies that $\{\hat x_S, \hat l_Q\}$ is a feasible solution to \eqref{eq5-10-2}. Hence
\begin{eqnarray*}
opt_{LP_2} \leq \sum_{S\in\mathcal S}c_S\hat x_S=2^{i_0+1}\sum_{S\in\mathcal S}c_Sx_S^f=2^{i_0+1}opt_{LP_1}\leq 2^{i_0+1}opt_{MDSC},
\end{eqnarray*}
where the last inequality comes from Lemma \ref{lem16-3-10}.

\vskip 0.2cm {\bf Claim 3.} $opt_{LP3}\leq opt_{LP_2}$, where $opt_{LP_3}$ is the optimal value of linear program \eqref{eq5-11-2}.

Suppose $(x,l)$ is a feasible solution to \eqref{eq5-10-2}. For each $r_e$-cover-set $Q$, let $P(Q)=\{sSe\}_{S\in Q}$ be the $r_{s,e}$-path set corresponding to $Q$. Setting $l_{P(Q)}=l_Q$ will induce a feasible solution to \eqref{eq5-11-2}. The claim is proved.

\vskip 0.2cm It was shown by Chekuri {\it et.al} in paper \cite{Chekuri} that the integrality gap for linear program \eqref{eq5-11-2} is $O(r_{\max}\log n)$. Combining this with Claim 2 and Claim 3, the output $\mathcal S'$ of Algorithm \ref{algo5-11-1} has cost
$$
c(\mathcal S')= 2^{i_0+1}O(r_{\max}\log n)opt_{MDSC}.
$$
Since $\mathcal S'$ fully covers all elements in $Y_{i_0}$, using the definition of $I$, we have
\begin{eqnarray*}
\frac{c(\mathcal S')}{|\mathcal C(\mathcal S')|} \leq \frac{c(\mathcal S')}{|Y_{i_0}|}\leq \frac{2^{i_0+1}O(r_{\max}\log n)opt_{MDSC}}{2^{i_0}/(I+1)}=O(r_{\max}(\log n)^2)opt_{MDSC}.
\end{eqnarray*}
The theorem is proved.
\end{proof}

Combining Theorem \ref{thm5-11-3} with Theorem \ref{th1}, we have the following result.

\begin{corollary}
PSMC has an $(O(\frac{r_{\max}\log^2n}{\varepsilon}),1-\varepsilon)$-bicriteria algorithm.
\end{corollary}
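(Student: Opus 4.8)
The plan is to obtain the corollary by a direct composition of the two main results established above, with essentially no new work beyond a substitution. First I would verify that the MDSC approximation of Section \ref{secMDSC} is exactly the subroutine invoked in Line \ref{algo-line3} of Algorithm \ref{main}. By Theorem \ref{thm5-11-3}, Algorithm \ref{algo5-11-1} is an $\alpha$-approximation for (constrained) MDSC with $\alpha=O(r_{\max}(\log n)^2)$ whenever $n\geq 32$; the finitely many small instances are handled trivially. In each iteration, Algorithm \ref{main} runs this routine on the current reduced instance $(E',\mathcal S',c,r',q')$, which is itself an MDSC instance over at most $n$ elements with maximum requirement at most $r_{\max}$, so Theorem \ref{thm5-11-3} guarantees density within factor $\alpha$ of optimal for that instance. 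Thus the hypothesis of Theorem \ref{th1} is met with this uniform value of $\alpha$.

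Next I would substitute $\alpha=O(r_{\max}(\log n)^2)$ into the bicriteria guarantee of Theorem \ref{th1}. This yields a sub-collection that fully covers at least $(1-\varepsilon)qn$ elements at cost at most
\[
\alpha\left(1+\ln\left(\frac{1}{\varepsilon}\right)+\frac{1-q}{\varepsilon q}\right)opt .
\]
It then remains only to simplify the $\varepsilon$-dependent factor. Since $q$ is a fixed constant in $(0,1)$, each of the terms $1$, $\ln(1/\varepsilon)$ and $\frac{1-q}{\varepsilon q}$ is $O(1/\varepsilon)$ as $\varepsilon\to 0$ (the last term dominates), so the whole parenthesized factor is $O(1/\varepsilon)$. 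Multiplying through gives a cost bound of $O\!\left(\frac{r_{\max}\log^2 n}{\varepsilon}\right)opt$, which is precisely the claimed ratio.

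There is no genuinely hard step here: the corollary is a mechanical combination of Theorem \ref{th1} and Theorem \ref{thm5-11-3}. The only point requiring a moment of care is the matching of the subroutine across iterations—confirming that the approximation factor $\alpha$ is uniform (it depends only on $n$ and $r_{\max}$, both bounded by their values in the original instance, and does not degrade as the instance is reduced). Once that is observed, the stated bound follows immediately by substitution and the constant-$q$ simplification.
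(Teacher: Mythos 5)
Your proposal is correct and matches the paper's argument exactly: the corollary is obtained by plugging $\alpha=O(r_{\max}\log^2 n)$ from Theorem \ref{thm5-11-3} into the bound of Theorem \ref{th1} and absorbing the factor $1+\ln(1/\varepsilon)+\frac{1-q}{\varepsilon q}$ into $O(1/\varepsilon)$ since $q$ is a constant. Your extra remark that $\alpha$ stays uniform across the reduced instances is a sensible (if routine) check that the paper leaves implicit.
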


\section{Conclusion and Discussion}\label{secConclude}
In this paper, we studied the partial set multi-cover problem (PSMC). By proposing a new NP-hard problem called minimum density sub-collection (MDSC) and designing an $O(r_{\max}\log^2n)$-approximation for MDSC, we obtained an $(\frac{O(r_{\max}\log^2n}{\varepsilon}),1-\varepsilon)$-bicritera algorithm for PSMC. 

Our studies show that PSMC is a very challenging problem. One reason is that it possesses an ``all-or-nothing'' property. As an illustration, suppose the covering requirement for each element is 10. Covering an element 9 times has the same effect as not covering it at all. So, although the algorithm has strived to pick a large amount of sets, it is still possible that only very few elements have their covering requirements satisfied. Since what matters is only the number of fully covered elements, a lot of efforts might have been wasted on fruitless covering on those not fully covered elements. Thus, in order to obtain a good approximation, one has to control the wasted. Such an all-or-nothing phenomenon is interesting and appear frequently in the real world. New ideas are needed and conquering such a problem will have a great theoretical value.

Our performance ratio depends on the maximum covering requirement $r_{\max}$. New ideas have to be further explored to design algorithms without dependence on $r_{\max}$. Designing approximation algorithms without violation is another challenging problem.


\section*{Acknowledgment}
This research is supported by NSFC (11771013, 11531011, 61751303).




\begin{thebibliography}{1}


\bibitem{Ausiello} G. Ausiello, P. Crescenzi. G. Gambosi, V. Kann, A. Marchetti-Spaccamela and M. Protasi, Complexity and Approximation: Combinatorial Optimization Problems and Their Approximability Properties (2003), Springer.

\bibitem{Bar-Yuhuda} R. Bar-Yuhuda, Using homogeneous weights for approximating the partial cover problem. Journal of Algorithms, 39 (2001) 137--144.

\bibitem{BarYuhuda} R. Bar-Yuhuda, S. Even, A Local-Ratio Theorem for Approximating the Weighted Vertex Cover Problem. North-Holland Mathematics Studies, 109 (1985) 27-45.


\bibitem{Charikar} M. Charikar, S. Khuller, D.M. Mount and G. Narasimhan, Algorithms for facility location problems with outliers. Proc. 12th ACM-SIAM Sympos. Discrete Algorithms (2001) 642--651.

\bibitem{Chekuri} C. Chekuri, A. Ene, A. Vakilian, Prize-collecting survivable network design in node-weighted graphs. APPROX/RANDOM LNCS 7408 (2012) 98--109.

\bibitem{Chekuri1} 	C. Chekuri, G. Even, A. Gupta and D. Segev, Set connectivity problems in undirected graphs and the directed steiner network problem. ACM Trans. Algorithms 7(2) (2011) 18:1-18:17.


\bibitem{Chvatal} V. Chvatal, A greedy heuristic for the set-covering problem, Math. Oper. Res. 4 (1979) 233--235.

\bibitem{David} P. David, B. David, The Design of Approximation Algorithms. Cambridge University Press, 2010.


\bibitem{Dinur} I. Dinur, D. Steurer, Analytical approach to parallel repetition. STOC2014 (2014) 624--633.

\bibitem{Dobson} G. Dobson, Worst-case analysis of greedy heuristics for integer programming with nonnegatice data. Math. Oper. Res. 7 (1982) 515--531.


\bibitem{Feige} U. Feige, A threshold of ln n for approximating set cover, in Proc. 28th ACM Symposium on the Theory of Computing, pp. 312--318, 1996.



\bibitem{Gandhi} R. Gandhi, S. Khuller, A. Srinivasan, Approximation algorithms for partial covering problems. Journal of Algorithms, 53(1) (2004) 55--84.

\bibitem{Hochbaum} D.S. Hochbaum, Approximation algorithms for the set covering and vertex
cover problems. SIAM Journal on Computing, 11(1982) 555--556.
\bibitem{Ignizio}  J.P. Ignizio, T.M. Cavalier, Linear programming, Prentice-Hall, Inc. Upper Saddle River, NJ, USA (1994).


\bibitem{Johnson} D.S. Johnson, Approximation algorithms for combinatorial problems, J. Comput. System Sci., 9 (1974) 256--278.

\bibitem{Karp} R.M. Karp, Reducibility among combinatorial problems , in Complexity of Computer Computations, R. E. Miller and J. W. Thatcher, eds., Plenum Press, New York, pp. 85--103, 1972.

\bibitem{Kearns} M. Kearns, The Computational Complexity of Machine Learning. MIT Press, Cambridge, MA, 1990.


\bibitem{Khot} Khot S, Regev O (2008) Vertex cover might be hard to approximate to within $2-\varepsilon$. Journal of Computer and System Sciences, 74(3) 335--349.

%

\bibitem{Konemann} J. Konemann, O. Parekh, D. Segev, A uinifed approach to approximating partial covering problems. Algorithmica, 59 (2011) 489--509.


\bibitem{Lovasz} L. Lov\'{a}sz, On the ratio of optimal integral and fractional covers. Discrete Math., 13 (1975) 383--390.

\bibitem{Nutov} Z. Nutov, Approximating Steiner networks with node weights, SIAM J. Computing, 39(7) (2010) 3001--3022.


\bibitem{Rajagopalan} S. Rajagopalan , V. Vazirani, Primal-dual RNC approximation algorithms for set cover and covering integer programs. SIAM J. COMPUT, 28 (1998) 525--540.

\bibitem{Ran} Y. Ran, Z. Zhang, H. Du, Y. Zhu, Approximation algorithm for partial positive influence problem in social network. Journal of Combinatorial Optimization, 33(2) (2017) 791--802.

\bibitem{Ran2} Y. Ran, Y. Shi, Z. Zhang, Local ratio method on partial set multi-cover, Journal of Combinatorial Optimization, 34(1) (2017) 302--313.

\bibitem{Ran3} Y. Ran, Y. Shi, Z. Zhang, Primal dual algorithm for partial set multi-cover, submitted to COCOA2018.


\bibitem{Setty} V. Setty, G. Kreitz, G. Urdaneta, R. Vitenberg, M. van Steen, Maximizing the number of satisfied subscribers in pub/sub systems under capacity constraints. INFOCOM 2014, 2580--2588.

\bibitem{Slavik} P. Slav\'{i}k, Improved performance of the greedy algorithm for partial cover. Information Processing Letters, 64(5): 251--254.

 \bibitem{Zhang} Z. Zhang, J. Willson, Z. Lu, W. Wu, X. Zhu and D-Z. Du, Approximating maximum lifetime $k$-coverage through minimizing weighted $k$-cover in homogeneous wireless sensor networks. IEEE/ACM Transactions on Networking, 24(6) (2016) 3620--3633.

%
%





\end{thebibliography}
%

\section*{Appendix: A Flaw in the Conference Version.}
A preliminary version of this paper was presented in INFOCOM2017. Making using of an $\alpha$ approximation algorithm for MDSC, it was claimed that one can obtain an $\alpha H\lceil qn\rceil$-approximation algorithm for PSMC. However, there is a flaw. The algorithm in that paper greedily selects densest sub-collections until at least $qn$ elements are fully covered. Then it prunes the last sub-collection $\mathcal R$ by greedily selecting sub-collections of $\mathcal R$ consisting of at most $r_{\max}$ sets until the covering requirement is satisfied. Suppose the sub-collections obtained in the pruning step are $\mathcal R_1',\ldots,\mathcal R_l'$. The approximation analysis relies on the following inequality:
\begin{equation}\label{eq180203-1}
\frac{c(\mathcal R_1')}{|\mathcal C(\mathcal R_1')|}\leq \frac{c(\mathcal R_2')}{|\mathcal C(\mathcal R_2')|}\leq \cdots \leq \frac{c(\mathcal R_l')}{|\mathcal C(\mathcal R_l')|}.
\end{equation}
However, this is not true. Consider the following example
\begin{example}
$\mathcal S=\{S_1,S_2,S_3\}$ with $S_1=\{e_1,e_2\}$, $S_2=\{e_1,e_3\}$, $S_3=\{e_2,e_3\}$, $r(e_1)=r(e_2)=r(e_3)=r_{\max}=2$, $c(S_1)=c(S_2)=c(S_3)=1$, and $q=2/3$.
\end{example}
For this example, the densest sub-collection of $\mathcal S$ is $\mathcal R=\mathcal S$. Then the pruning step selects $\mathcal R_1'=\{S_1,S_2\}$ and $\mathcal R_2'=\{S_3\}$ sequentially. Notice that
$$\frac{c(\mathcal R_1')}{|\mathcal C(\mathcal R_1')|}=2>\frac{1}{2}=\frac{c(\mathcal R_2')}{|\mathcal C(\mathcal R_2')|}.$$

The reason why inequality \eqref{eq180203-1} does not hold is because $|\mathcal C(\mathcal R')|$ is not a submodular function, and it is difficult to bypass this obstacle. Obtaining an approximation algorithm achieving a guaranteed performance ratio in the classic sense is a very challenging problem.

\end{document}